
\documentclass[letterpaper, 10 pt, conference]{ieeeconf}  

\IEEEoverridecommandlockouts                              

\overrideIEEEmargins                                      


\usepackage{amsthm}
\usepackage{amsmath} 
\usepackage{amssymb}  
\usepackage{caption}
\usepackage{color}
\usepackage{empheq} 
\usepackage{epsfig} 
\usepackage{graphicx} 
\usepackage{subcaption}
\usepackage{times} 
\usepackage{wrapfig}
\usepackage{multicol}
\usepackage[caption=false]{subfig}
\usepackage{cite}
\usepackage{relsize}
\usepackage[capitalise]{cleveref}
\usepackage{array}
\usepackage[]{algorithm2e}


\renewcommand{\vec}[1]{\mathbf{#1}}
\newcolumntype{M}{>{\centering\arraybackslash}m{\dimexpr.25\linewidth-2\tabcolsep}}

\graphicspath{{Media/}}
\newtheorem{theorem}{Theorem}
\newtheorem{proposition}{Proposition}

\title{\LARGE \bf
Optimal Disease Outbreak Detection in a Community Using Network Observability}

\author{Atiye Alaeddini$^{1}$ and Kristi A. Morgansen$^{2}$
\thanks{$^{1}$Atiye Alaeddini is with the William E. Boeing Department of Aeronautics and Astronautics,
        University of Washington, Seattle, WA, 98195-2400
        {\tt\small atiye@u.washington.edu}}%
\thanks{$^{2}$Kristi A. Morgansen is with the William E. Boeing Department of Aeronautics and Astronautics, University of Washington,
        Seattle, WA, 98195-2400
        {\tt\small morgansen@aa.washington.edu}}%
\thanks{This work was supported in part by ONR MURI N000141010952.}%
}

\begin{document}

\maketitle
\thispagestyle{empty}
\pagestyle{empty}

\begin{abstract}

Given a network, we would like to determine which subset of nodes should be measured by limited sensing facilities to maximize information about the entire network. The optimal choice corresponds to the configuration that returns the highest value of a measure of observability of the system. Here, the determinant of the inverse of the observability Gramian is used to evaluate the degree of observability. Additionally, the effects of changes in the topology of the corresponding graph of a network on the observability of the network are investigated. The theory is illustrated on the problem of detection of an epidemic disease in a community. The purpose here is to find the smallest number of people who must be examined to predict the number of infected people in an arbitrary community. Results are demonstrated in simulation.

\end{abstract}

\section{INTRODUCTION}

Analysing complex interconnected systems, such as the Internet, social networks, and biological networks has been the subject of increasing research attention in recent years. A key characteristic of system analysis is being able to determine information about the system state using measurement data. This observability analysis, as applied to a network, aims to predict the behaviour of each individual in the network based on a set of measured properties and on the local rules governing the state of individual vertices. Observability-based design of a multi-agent network can be found in some recent works such as \cite{pequito2014optimal, pequito2014design}. In these works, the objective is to find the structure of the communication graph to guarantee observability (unobservability) in \cite{pequito2014optimal} (\cite{pequito2014design}). The authors of these works studied the effect of different topologies of the communication graph and of the communication weights between nodes in a network on the rank of the observability matrix.

Social networks have the longest history in the field of study of real-world networks where studying interactions in human behaviour has received significant attention. For instance, it is of interest to consider the process of decision-making in a group of people \cite{srivastava2014collective} or to study the process of creating different colonies in a human group based on the level of trust between the individuals in that group \cite{xia2015structural}. In these studies, data collection is usually carried out by querying a subset of participants in the network directly using questionnaires or interviews. Such methods are labour-intensive, and the size of the network that can be observed is limited. Therefore the problem of optimal choice of nodes to be observed is critical. Another application of high relevance is electric power grid management. An electric power grid is a network of interconnected high-voltage transmission lines spanning a country or a portion of a country. Identification of abnormal patterns of behaviour in power distribution networks is very important and has been the subject of many studies recently, e.g.\ \cite{manandhar2014detection} and \cite{giraldo2014delay}. Finally, some biological systems can be represented as networks. Genetic regulatory networks, blood vessel networks, and neural networks are classes of biological network systems have been receiving recent attention in the research literature. 

In recent works, observability criteria have been used to find the best sensor location \cite{Krener09, mathews2007reliability, Serpas13}. In many cases, tests of observability can reveal useful information about system structure that can be leveraged to design more effective or more efficient estimators. For example, some choices of sensor placement may lead to faster estimator convergence times in nonlinear systems \cite{Singh05}. However, nonlinear observability analysis is computationally expensive. The concept of using computational rather than analytical methods is first introduced in \cite{lall2002subspace} to evaluate the observability Gramian of a nonlinear system. The empirical observabiltiy Gramian has been used successfully in a number of contexts for design of improved sensing \cite{DeVries13, Singh05, Serpas13}. In order to perform an optimization based on the observability Gramian matrix, a scalar function of the matrix must be chosen. The smallest singular value \cite{Krener09, DeVries13}, the determinant \cite{Serpas13, Singh05, wouwer2000approach}, the trace \cite{Singh05}, and the spectral norm \cite{Singh05} of the observability Gramian are some of the criteria that have been used for observability-based optimization problems. The measure of observability used in this paper is the determinant of the inverse of the empirical observability Gramian.
%
 
The focus of this paper is studying a type of biological network called a virus spreading network. The purpose is to detect the spread of an epidemic disease in an arbitrary community by observing the infected/healthy status of a small number of people in that community. We show that if the graph of interaction between individuals in a network is connected, we do not need to examine every single individual to detect their states. Instead we can allocate sensing facilities on a few nodes, which are obtained from an optimization problem introduced here, and reconstruct the states of the entire network.

The rest of the paper is structured as follows. In \cref{sec:prelim}, we give notation and definitions. The model under study is introduced in \cref{sec:problem}. The observability-based optimal selection of observing nodes is presented in \cref{sec:SuperNode}. The numerical simulations for the problem of disease spreading in an arbitrary community are given in \cref{sec:goodwin}, and \cref{sec:conclusion} concludes the paper.

\section{BACKGROUND}
\label{sec:prelim}

\subsection{Graph Theory}

A network is a set of items, which we call nodes, with connections between them, called edges. The interaction between agents in a network is represented by a graph $\mathcal{G}=(V,E)$. Each agent in a network is denoted as a node, and the edges represent interaction links between agents. The number of nodes is assumed to be $N$, and the number of edges to be $M$. The node set, $V$, consists of all nodes in the network. The edge set, $E$, is comprised of pairs of nodes $\{i,j\}$, where nodes $i$ and $j$ are adjacent. The neighbourhood set, $\mathcal{N}_i$, of node $i$ is composed of all agents in $V$ adjacent to node $i$. The edges are encoded through the index mapping $\sigma$ such that $l=\sigma(i,j)$, if and only if edge $l$ connects nodes $i$ and $j$. The adjacency matrix is an $N \times N$ symmetric matrix with $\mathcal{A}_{ij}=1$ when $\{i,j\} \in E$, and $\mathcal{A}_{ij}=0$, otherwise. Here, we assume an undirected graph structure where $\{i,j\} \in E \Rightarrow \{j,i\} \in E$.

\subsection{System Dynamics and Observability}

In this work, we desire to maximize the amount of information we can gather about all the agents in the network by observing a small number of agents. Specifically, we would like to maximize the observability of the overall network by sharing limited information between the agents. To facilitate our study of network observability, specifically in the context of nonlinear process or measurement dynamics, we present a brief review of the relevant aspects of nonlinear observability that will be utilized here.

Consider a nonlinear system 
\begin{equation} \Sigma :\left\{ \begin{aligned}
 &    \dot{\vec{x}}=\vec{f}(\vec{x}), \ \  \vec{x} \in \mathbb{R}^n, \\
 &	 \vec{y}=\vec{h}(\vec{x}),  \ \ \vec{y} \in \mathbb{R}^m. \label{ODE} \end{aligned} \right. 
\end{equation}
Given the nonlinear system, $\Sigma$, one can linearize the nonlinear system about a given nominal trajectory, $\tilde{\vec{x}}(t)$. The linearization of this system is given by
\begin{equation}
    \delta \vec{\dot{x}}=F(t) \delta \vec{x}, \ \  \delta y=H(t) \delta \vec{x}, \label{linearized}
\end{equation}
where
\begin{equation}
    F(t) = \left. \frac{\partial \vec{f}}{\partial \vec{x}} \right |_{\vec{x}=\tilde{\vec{x}}}, \ \ H(t) = \left. \frac{\partial \vec{h}}{\partial \vec{x}} \right |_{\vec{x}=\tilde{\vec{x}}}. \label{Jacobi_mat}
\end{equation}
One approach for evaluating observability of nonlinear systems is to use the observability Gramian of a linearization of the system. 
If $\Phi(t)$ is the state transition matrix of the linearized system, then the local observability Gramian is
\begin{equation}
	W_O(\tilde{\vec{x}})=\int_{0}^{t_f} \Phi^T(t)H^T(t)H(t)\Phi(t) \mathrm{d}t \,.
\end{equation}
This formulation requires computation of Jacobian matrices. 
Furthermore, the Gramian of the linearized system will not always result in a good description of the behaviour of the original nonlinear system, specifically on wide operating ranges. To address these issues, we use the empirical local observability Gramian \cite{Krener09}. This tool provides an accurate description of a nonlinear system's observability, while it is much less computationally expensive than some other tools such as Lie algebra based approaches.

The concept of the empirical observability Gramian is related to the concept of output energy. Given a small perturbation $\epsilon > 0$ of the state, let $\vec{x}^{\pm i} (0)= \vec{x}(0) \pm  \epsilon \vec{e}_i$ and $\vec{h}^{\pm i}(t)$ be the corresponding output, where, $\vec{e}_i$ is the $i^{th}$ unit vector in $\mathbb{R}^n$. The empirical local observability Gramian at $\vec{x}(0)$ is the $n \times n$ matrix, $W_O$, whose $(i,j)$ component is 
\begin{equation}
	\frac{1}{4 \epsilon^2} \int_{0}^{t_f} \left( \vec{h}^{+i}(t)-\vec{h}^{-i}(t) \right)^T \left(\vec{h}^{+j}(t)-\vec{h}^{-j}(t) \right) \mathrm{d}t.\nonumber \label{EmpGram}
\end{equation}
It can be shown that if the system is smooth, then the empirical local observability Gramian converges to the local observability Gramian as $\epsilon \to 0$. Note that the perturbation, $\epsilon$, should always be chosen such that system stays in the region of attraction of the equilibrium point. As suggested by the authors of \cite{Krener09}, if the size of each state coordinate is of order one, then a reasonable choice of $\epsilon$ is order 0.01 or 0.001.

\section{PROBLEM DEFINITION}
\label{sec:problem}

\subsection{Problem Statement}

We assume a system of $N$ agents. Measurement of node $i$ is given by 
\begin{equation}
	y^i = x_i \,,\ \ i=1, \ldots, N. \label{imeasure}
\end{equation}
Now, let us assume there is a group of agents called observing nodes, $\bar{V}$, from which we want to maximize information measured from the network. Therefore, we have a subset of nodes $\bar{V} \subset V$ for which we want to maximize a measure of observability. Define a binary variable $\zeta \in \mathbb{R}^N$ such that $\zeta_i = 1$ if $V_i \in \bar{V}$ and $\zeta_i = 0$ otherwise. Then, the measurement is given by
\begin{equation}
	\vec{y} =\begin{bmatrix} \zeta_1 y^1 & \zeta_2 y^2 & \cdots &\zeta_N y^N \end{bmatrix}^T\,. \label{Measurement}
\end{equation}

Here, the objective is to determine a set of $r < N$ nodes such that if we observe those $r$ nodes, then we would be able to reconstruct the state of all nodes in the network. In order to find these nodes, we maximize the determinant of the observability Gramian by varying the location of the observing nodes in the network. Maximizing the determinant of the observability Gramian corresponds to a maximization of independence between outputs \cite{wouwer2000approach}. The problem can be formulated mathematically as:
\begin{equation}
\begin{aligned}
& \underset{\zeta_1, \zeta_2, \cdots, \zeta_N}{\text{maximize}}
& & \det(W_O)\,.
\end{aligned}
\end{equation}
Suppose the observation is partitioned into multiple sub-vectors from a number of individual sensors (similar to \eqref{Measurement}), then the total observability Gramian is the sum of the separate observability Gramian matrices, each obtained from having each individual sensor separately \cite{Krener09}. Therefore,
\begin{equation}
	W_O=\sum_{i=1}^{N} \zeta_i W_{O,i} \,,
\end{equation}
where $W_{O,i}$ is the observability Gramian assuming the observation to be $y= x_i$, i.e.\ we have a single sensor located on node $i$. Knowing this property of the observability Gramian matrix, the problem can be rewritten as a minimization problem, termed \emph{D-optimal design} in \cite{boyd2004convex}:
\begin{equation}
\begin{aligned}
& \underset{\zeta}{\text{minimize}}
& & \log \left[ \det\left( \sum_{i=1}^{N} \zeta_i W_{O,i} \right)^{-1} \right] \\
& \text{subject to}
& & \sum_{i=1}^{N} \zeta_i \leq r \, \\
&
& & \zeta_i \in \{0,1\},
\end{aligned} \label{MaxDet}
\end{equation}
where, $W_{O,i}$ is the observability Gramian obtained from measuring node $i$. 

\subsection{Virus Spreading Model in Networks}
\label{subsec:VirusModel}

In this paper, we formulate our problem using the setting proposed in \cite{van2013homogeneous} for virus spread in any network with $N$ nodes. This approach is one of the most popular epidemic models and is called the Susceptible-Infected-Susceptible (SIS) model \cite{anderson1992infectious}. In this model, each node in the network is either infected or healthy. An infected node, $i$, can infect its neighbours with an infection rate, $\beta_i$, and it is cured with curing rate, $\delta_i$. As a node is cured and healthy, it is again prone to the virus. The spread is modelled by an undirected network specified by a symmetric adjacency matrix, $\mathcal{A}$. The state of each node $i$ is described by the binary random variable $X_i(t) \in \{H,I\}$, i.e.\, the node at time $t$ in the network has two states: infected with probability $Pr[X_i(t) = I]$ and healthy with probability $Pr[X_i(t) = H]$. The evolution of the states is described by a Markov Process. The two possible state transitions of node $i$ are:
\begin{enumerate}
\item[(1)] Assume node $i$ is healthy at time $t$, i.e.\ $X_i(t) = H$. This node can switch to infected state over small time $\Delta t > 0$ with probability: $Pr[X_i(t+\Delta t)= I| X_i(t) = H]= \sum_{j\in \mathcal{N}_i} \mathcal{A}_{ij}\beta_i X_j(t) \Delta t + o(\Delta t)$.
\item[(2)] Assume node $i$ is infected at time $t$, i.e.\ $X_i(t) = I$. The probability of being recovered after small time $\Delta t > 0$ is: $Pr[X_i(t+\Delta t) = H| X_i(t) = I] = \delta_i \Delta t + o(\Delta t)$.
\end{enumerate}

Denoting $x_i(t) = Pr[X_i(t) = I]$ and considering that $Pr[X_i(t) = H] = 1-x_i(t)$, the Markov differential equation for node $i$ turns out to be nonlinear, $\dot{x}_i(t)=f_i(\vec{x})$, as follows:
\begin{equation}
	\dot{x}_i(t)=\beta_i \sum_{j=1}^N \mathcal{A}_{ij} x_j(t) - x_i(t) \left(\beta_i \sum_{j=1}^N \mathcal{A}_{ij} x_j(t) + \delta_i \right). \label{DiffNode_i}
\end{equation}
By defining $\vec{x} = \begin{bmatrix} x_1 & x_2 & \cdots & x_N \end{bmatrix}^T$, the differential equation \eqref{DiffNode_i} can be written in matrix form as
\begin{equation}
	\dot{\vec{x}}(t)= \left( B \mathcal{A} - D \right) \vec{x}(t) -\left( \sum_{j=1}^N e_j e_j^T \vec{x}(t) e_j^T \right) B \mathcal{A} \vec{x}(t) \,, \label{DiffNodes}
\end{equation}
where $B = \text{ diag}(\beta_j)$ and $D = \text{ diag}(\delta_j)$. In this paper, we assume that $\lambda_1 \left( B \mathcal{A} - D \right) < 0$, where $\lambda_1(\cdot)$ is the largest eigenvalue. It is proved in \cite{preciado2013optimal} that the linear dynamics system $\dot{\vec{x}}(t) = \left( B \mathcal{A} - D \right) \vec{x}(t)$ is an upper bound for the nonlinear dynamics system \eqref{DiffNodes}. Therefore, the disease-free equilibrium ($\vec{x} = 0$) is stable.

\subsection{Observability Gramian Computation}

As it was explained in \cref{subsec:VirusModel}, the model of virus spreading is a nonlinear dynamics system. 
The empirical observability Gramian, considering $y^k = x_k$, is an $N \times N$ matrix, $W_{O,k}$, whose $(i,j)$ component is 
\begin{equation}
	\frac{1}{4\epsilon^2} \int_{0}^{t_f} \left(x_k^{+i}(t)-x_k^{-i}(t)\right)^T \left(x_k^{+j}(t)-x_k^{-j}(t)\right) \mathrm{d}t. \label{EmpGram_i}
\end{equation}
In this case, the initial condition of the system should be perturbed in all directions of the states. In the case of a network with a large number of nodes, we need to solve the nonlinear differential equation \eqref{DiffNodes} $2N$ times. When the number of nodes is large, the computation of the empirical observability Gramian for all $N$ nodes is computationally expensive. We would like to be able to detect an epidemic by knowing only a subset of nodes in the network, termed the \emph{test set}. Here, we randomly select $s \leq N$ nodes, and perturb the initial condition in these $s$ directions. Then the observability Gramian, $W_{O}$, is an $s \times s$ matrix. Note that $r \ll s$, and the test set is chosen randomly.

The infection rate of node $i$, $\beta_i$, and the recovery rate of node $i$, $\delta_i$, can be changed by allocating preventative resources such as vaccinations and antidotes on this node. As the value of either infection rate or recovery rate changes, the observability Gramian matrices should be updated accordingly. 

The observability Gramian, $W$, of a linear system,
\begin{equation}
	\dot{\vec{x}} = A \vec{x}\, \ \ \vec{y} = C \vec{x}
\end{equation}
is obtained by solving for $W$ in
\begin{equation}
	A^T W + W A + C^T C =0\,.
\end{equation}
Therefore,
\begin{equation}
	A^T \left(\frac{\partial W}{\partial \chi}\right) + \left(\frac{\partial W}{\partial \chi}\right) A +\left(\frac{\partial A}{\partial \chi}\right)^T W + W \left(\frac{\partial A}{\partial \chi}\right)=0\,, \label{Lyap_beta_k}
\end{equation}
where $\chi$ is a variable of the dynamic system ($A$), and does not affect the observation ($C$). This equation can be solved to obtain $\displaystyle \left(\frac{\partial W}{\partial \chi}\right)$. 

\begin{proposition} \label{Linearizedxtilde}
Consider the adjacency matrix, $\mathcal{A}$, of a connected graph, and two sets of positive numbers $\{\beta_i\}_{i=1}^N$ and $\{\delta_i\}_{i=1}^N$. Assume $\tilde{\vec{x}}$ represents the infected/healthy status of a network at an arbitrary time $\tau$. Define $\tilde{\beta}_i = \left(1-\tilde{x}_i\right)\beta_i$ and $\tilde{B}=\text{diag}(\tilde{\beta}_i)$. Then the nonlinear system \eqref{DiffNodes} for any $\vec{x}$ close to $\tilde{\vec{x}}$ can be approximated as $\dot{\vec{x}}(t) = \left( \tilde{B} \mathcal{A} - D \right) \vec{x}(t)$.
\end{proposition}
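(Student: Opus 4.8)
The plan is to reduce the statement to an elementary rearrangement of \eqref{DiffNodes} followed by a frozen-coefficient (first-order) argument. First I would rewrite the scalar equation \eqref{DiffNode_i} by collecting the two terms that carry the common factor $\beta_i \sum_j \mathcal{A}_{ij} x_j$, which gives the equivalent form $\dot{x}_i = (1-x_i)\,\beta_i \sum_{j} \mathcal{A}_{ij} x_j - \delta_i x_i$, i.e.\ in matrix notation
\[
\dot{\vec{x}}(t) = \left( \text{diag}(1-x_i)\, B \mathcal{A} - D \right) \vec{x}(t)\,;
\]
this is just \eqref{DiffNodes}, once the quadratic term $\left( \sum_j e_j e_j^T \vec{x} e_j^T \right) B\mathcal{A}\vec{x}$ is recognized as $\text{diag}(x_i)\,B\mathcal{A}\vec{x}$.

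Next, I would set $\vec{x} = \tilde{\vec{x}} + \vec{e}$, with $\vec{e}$ the deviation from the snapshot $\tilde{\vec{x}} = \tilde{\vec{x}}(\tau)$, and split $\text{diag}(1-x_i) = \text{diag}(1-\tilde{x}_i) - \text{diag}(e_i)$. Since $\text{diag}(1-\tilde{x}_i)\,B = \text{diag}\bigl((1-\tilde{x}_i)\beta_i\bigr) = \tilde{B}$ by the definition of $\tilde{\beta}_i$, this yields exactly
\[
\dot{\vec{x}}(t) = \left( \tilde{B} \mathcal{A} - D \right) \vec{x}(t) \;-\; \text{diag}(e_i)\, B \mathcal{A}\, \vec{x}(t)\,.
\]
The residual vector field $\text{diag}(e_i) B\mathcal{A}\vec{x}$ has norm $O\!\left( \|\vec{e}\|\,\|\vec{x}\| \right)$, hence it is of higher order in the deviation $\vec{x} - \tilde{\vec{x}}$ relative to the retained part; discarding it produces the claimed approximation $\dot{\vec{x}}(t) = (\tilde{B}\mathcal{A} - D)\vec{x}(t)$, valid for $\vec{x}$ close to $\tilde{\vec{x}}$. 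In particular, when $\tilde{\vec{x}} = 0$ the residual is genuinely $O(\|\vec{x}\|^2)$, and one recovers the upper-bounding linear system discussed in \cref{subsec:VirusModel}.

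I do not expect a real technical obstacle here; the only point requiring care is to state precisely the sense in which the system is ``approximated.'' What is being done is not the usual affine Jacobian linearization about a non-equilibrium point, but a frozen-coefficient replacement in which the susceptibility factor $\text{diag}(1-x_i)$ is held at its value $\text{diag}(1-\tilde{x}_i)$; the statement should therefore be read as: the neglected vector field vanishes to first order as $\vec{x} \to \tilde{\vec{x}}$. I would also remark that neither connectedness of $\mathcal{A}$ nor positivity of $\{\beta_i\}$ and $\{\delta_i\}$ is actually used in this approximation step itself — these belong to the standing hypotheses under which the subsequent observability and Gramian computations are carried out, and are simply carried along.
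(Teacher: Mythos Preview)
Your argument is correct and takes a slightly more direct route than the paper's. The paper proceeds via the formal Taylor expansion of $f_i$ about $\tilde{\vec{x}}$, writing $f_i(\vec{x}) = f_i(\tilde{\vec{x}}) + \sum_j \partial_j f_i(\tilde{\vec{x}})\,(x_j-\tilde{x}_j) + \varsigma_i$, and then identifies the zeroth- plus first-order piece with $\tilde{\beta}_i\sum_j\mathcal{A}_{ij}x_j - \delta_i x_i$, declaring the remainder $\varsigma_i$ quadratic in $\vec{x}-\tilde{\vec{x}}$ and hence negligible. You instead exploit the bilinear structure of \eqref{DiffNode_i} directly: factor the susceptibility $(1-x_i)$, freeze it at $(1-\tilde{x}_i)$, and read off the exact residual $-\text{diag}(x_i-\tilde{x}_i)\,B\mathcal{A}\,\vec{x}$ in one line. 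Both routes land on the same linear system; yours avoids computing partial derivatives and gives the remainder explicitly, while the paper's Taylor framing ties more visibly into the standard linearization language used for the Gramian computations that follow. Your closing observations---that this is a frozen-coefficient replacement rather than the usual affine Jacobian linearization about a non-equilibrium, and that connectedness of $\mathcal{A}$ and positivity of $\beta_i,\delta_i$ play no role in the approximation step itself---are accurate and worth retaining.
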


\begin{proof}
Using the Taylor expansion of $f_i(\vec{x})$ in \eqref{DiffNode_i}, the linear approximation in the vicinity of $\tilde{\vec{x}}$ is given by
\begin{equation} \label{tylorApp_i}
	\begin{aligned}
	&f_i(\vec{x}) = f_i(\tilde{\vec{x}}) +\left. \sum_{j=1}^N \frac{\partial f_i}{\partial x_j}\right|_{\tilde{\vec{x}}} \left(x_j - \tilde{x}_j \right)+ \varsigma_i \\
	&= \tilde{\beta}_i \sum_{j=1}^N \mathcal{A}_{ij} x_j-\delta_i x_i + \varsigma_i \,,
	\end{aligned}
\end{equation}
where 
\begin{equation}
\begin{aligned}
	\varsigma_i &= \frac{-\beta_i}{2} \left(\vec{x} - \tilde{\vec{x}} \right)^T \vec{e}_i \vec{a}_i^T \left(\vec{x} - \tilde{\vec{x}} \right) \,,
\end{aligned}
\end{equation}
where $\vec{a}_i = \mathcal{A} \vec{e}_i$. If $\left(\vec{x} - \tilde{\vec{x}} \right) \to \vec{0}$, then $\varsigma_i \to 0$, and the dynamic system can be written in matrix form as
\begin{equation} \label{linApprox_f}
\dot{\vec{x}}(t) \approx \left( \tilde{B} \mathcal{A} - D \right) \vec{x}(t) \,.
\end{equation}
\end{proof}

\begin{theorem}
Consider the adjacency matrix, $\mathcal{A}$, of a connected graph, and two sets of positive numbers $\{\beta_i\}_{i=1}^N$ and $\{\delta_i\}_{i=1}^N$ such that $\lambda_1 \left( B \mathcal{A} - D \right) < 0$. Define the approximation error of \eqref{linApprox_f}, $\varsigma$, as
\begin{equation}
	\varsigma = | \vec{f}(\vec{x}) - \left( \tilde{B} \mathcal{A} - D \right) \vec{x}(t) |_1 \,.
\end{equation}
Then, $\varsigma < \frac{1}{2} \text{max} \{\delta_i\}_{i=1}^N \| \tilde{\Delta} \|^2$,
where $\tilde{\Delta}_i = \left|x_i-\tilde{x}_i \right|$.
\end{theorem}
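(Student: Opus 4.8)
\emph{Proposed approach.} The plan is to reduce $\varsigma$ to a quadratic form in the error vector $\tilde{\Delta}$ and then dominate that form using the stability hypothesis. First I would reuse the second-order term already isolated in the proof of \cref{Linearizedxtilde}: the $i$-th component of $\vec{f}(\vec{x})-(\tilde{B}\mathcal{A}-D)\vec{x}$ is $\varsigma_i=-\tfrac{\beta_i}{2}(x_i-\tilde{x}_i)\sum_{j=1}^N\mathcal{A}_{ij}(x_j-\tilde{x}_j)$. Taking absolute values, applying the triangle inequality for $|\cdot|_1$, and using $\mathcal{A}_{ij}\ge0$ together with $\tilde{\Delta}_i=|x_i-\tilde{x}_i|\ge0$ to pull the absolute values inside, I get $\varsigma=\sum_{i=1}^N|\varsigma_i|\le\tfrac12\sum_{i,j}\beta_i\mathcal{A}_{ij}\tilde{\Delta}_i\tilde{\Delta}_j=\tfrac12\,\tilde{\Delta}^{T}B\mathcal{A}\,\tilde{\Delta}$, since $(B\mathcal{A})_{ij}=\beta_i\mathcal{A}_{ij}$.

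It then remains to show $\tilde{\Delta}^{T}B\mathcal{A}\,\tilde{\Delta}<\tilde{\Delta}^{T}D\,\tilde{\Delta}$ using $\lambda_1(B\mathcal{A}-D)<0$. Granting this, the claim follows at once: $\tilde{\Delta}^{T}B\mathcal{A}\,\tilde{\Delta}<\tilde{\Delta}^{T}D\,\tilde{\Delta}=\sum_{i=1}^N\delta_i\tilde{\Delta}_i^2\le(\max_i\delta_i)\,\|\tilde{\Delta}\|^2$, hence $\varsigma<\tfrac12(\max_i\delta_i)\,\|\tilde{\Delta}\|^2$.

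The real work — and the step I expect to be the obstacle — is this inequality between quadratic forms, because $B\mathcal{A}-D$ is not symmetric, so a bound on its largest eigenvalue does not by itself control $\tilde{\Delta}^{T}(B\mathcal{A}-D)\tilde{\Delta}$. The route I would take is to observe that $B\mathcal{A}-D$ is similar to the symmetric matrix $S:=B^{1/2}\mathcal{A}B^{1/2}-D$ through $B^{-1/2}(B\mathcal{A}-D)B^{1/2}=S$ (using that $B$ and $D$ are diagonal, so $B^{\pm1/2}$ commute with $D$), which turns the hypothesis into $\lambda_{\max}(S)<0$, i.e.\ $S\prec0$, equivalently $\mathcal{A}\prec B^{-1}D$; I would then try to transfer this negativity back to the form in $\tilde{\Delta}$, exploiting that $\mathcal{A}$ and $\tilde{\Delta}$ are entrywise nonnegative and, if needed, that $\mathcal{A}$ is irreducible since the graph is connected. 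Note that when all $\beta_i$ coincide the hypothesis reduces to $\beta\mathcal{A}\prec D$ and the transfer is immediate; the subtlety lives entirely in the heterogeneous case, where $B\mathcal{A}$ is not symmetric and the passage from $S\prec0$ to $\tilde{\Delta}^{T}B\mathcal{A}\,\tilde{\Delta}<\tilde{\Delta}^{T}D\,\tilde{\Delta}$ is the crux, whereas the componentwise estimate of the first step is routine.
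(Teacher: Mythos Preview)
Your first paragraph --- reducing $\varsigma$ to $\tfrac{1}{2}\,\tilde{\Delta}^{T}B\mathcal{A}\,\tilde{\Delta}$ via the second-order remainder $\varsigma_i$ from \cref{Linearizedxtilde} --- is exactly what the paper does, line for line.

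After that point the paper takes a shorter route than yours. It does not pass through $\tilde{\Delta}^{T}D\,\tilde{\Delta}$ or through the similarity $B^{-1/2}(B\mathcal{A}-D)B^{1/2}$ at all. Instead it bounds the quadratic form directly by a Rayleigh-type inequality,
\[
\tfrac{1}{2}\,\tilde{\Delta}^{T}B\mathcal{A}\,\tilde{\Delta}\;\le\;\tfrac{1}{2}\,\lambda_1(B\mathcal{A})\,\|\tilde{\Delta}\|^{2},
\]
and then reads off $\lambda_1(B\mathcal{A})<\max_i\delta_i$ from the hypothesis $\lambda_1(B\mathcal{A}-D)<0$. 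That is the entire second half of the paper's proof --- two displayed lines.

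Your alternative route, by contrast, is left open at precisely the step you flag as ``the crux'': you do not actually carry out the transfer from $S\prec 0$ back to $\tilde{\Delta}^{T}B\mathcal{A}\,\tilde{\Delta}<\tilde{\Delta}^{T}D\,\tilde{\Delta}$, and the natural substitution $w=B^{1/2}\tilde{\Delta}$ into $w^{T}Sw<0$ yields $\tilde{\Delta}^{T}B\mathcal{A}B\,\tilde{\Delta}<\tilde{\Delta}^{T}BD\,\tilde{\Delta}$, which is not the inequality you need. So as written your proposal has a genuine gap there. That said, your instinct is sound: the non-symmetry of $B\mathcal{A}$ that worries you is exactly the point the paper's argument passes over without comment, since for a non-symmetric matrix the quadratic form is governed by the symmetric part $\tfrac{1}{2}(B\mathcal{A}+\mathcal{A}B)$ rather than by $\lambda_1(B\mathcal{A})$ itself. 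The paper's proof is shorter because it simply asserts the Rayleigh bound; your longer route correctly isolates the delicate step but does not close it.
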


\begin{proof}
Based on the definition of the approximation error, $\varsigma$, we have
\begin{equation}
\begin{aligned}
\varsigma =\sum_{i=1}^N \left| \varsigma_i \right| &\leq \frac{1}{2} \sum_{i=1}^N \beta_i \tilde{\Delta}_i \sum_{j=1}^N \mathcal{A}_{ij} \tilde{\Delta}_j = \frac{1}{2} \tilde{\Delta}^T B \mathcal{A} \tilde{\Delta} \\
& \leq \frac{1}{2} \lambda_1(B \mathcal{A}) \| \tilde{\Delta} \|^2 \,.
\end{aligned}
\end{equation}
Recall that we assumed $\lambda_1 \left( B \mathcal{A} - D \right) < 0$. Thus,
\begin{equation}
\lambda_1(B \mathcal{A}) < \text{max} \{\delta_i\}_{i=1}^N \,.
\end{equation}
Therefore $\varsigma < \frac{1}{2} \text{max} \{\delta_i\}_{i=1}^N \| \tilde{\Delta} \|^2$.

\end{proof}

\begin{theorem} \label{AtildeStable}
Consider the adjacency matrix $\mathcal{A}$ of a connected graph, and two sets of positive numbers $\{\beta_i\}_{i=1}^N$ and $\{\delta_i\}_{i=1}^N$ such that $\lambda_1 \left( B \mathcal{A} - D \right) < 0$. Then the linearized dynamic system $\dot{\vec{x}}(t) = \left( \tilde{B} \mathcal{A} - D \right) \vec{x}(t)$ is asymptotically stable.
\end{theorem}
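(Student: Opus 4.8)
The plan is to deduce the claim from the hypothesis $\lambda_1(B\mathcal{A} - D) < 0$ by an entrywise comparison of the two system matrices combined with the monotonicity of the dominant eigenvalue of a Metzler matrix. Recall that asymptotic stability of a linear time-invariant system $\dot{\vec{x}} = M\vec{x}$ is equivalent to $\lambda_1(M) < 0$, so it suffices to prove $\lambda_1(\tilde{B}\mathcal{A} - D) \le \lambda_1(B\mathcal{A} - D)$.

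First I would record that the components $\tilde{x}_i$ of the status vector $\tilde{\vec{x}}$ lie in $[0,1]$, since each $x_i$ is a probability; hence $0 \le \tilde{\beta}_i = (1-\tilde{x}_i)\beta_i \le \beta_i$ for every $i$, and therefore $0 \le \tilde{B} \le B$ entrywise, both being diagonal with nonnegative entries. Because the adjacency matrix $\mathcal{A}$ is nonnegative, multiplication preserves this inequality, giving $0 \le \tilde{B}\mathcal{A} \le B\mathcal{A}$ entrywise, and subtracting the common diagonal term $D$ yields $\tilde{B}\mathcal{A} - D \le B\mathcal{A} - D$ entrywise.

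Next I would observe that both $\tilde{B}\mathcal{A} - D$ and $B\mathcal{A} - D$ are Metzler matrices, their off-diagonal entries $\tilde{\beta}_i\mathcal{A}_{ij}$ and $\beta_i\mathcal{A}_{ij}$ being nonnegative. For a Metzler matrix $M$, picking a scalar $c > 0$ large enough that $M + cI \ge 0$ gives $\lambda_1(M) = \rho(M + cI) - c$, where $\rho$ denotes the spectral radius; since $\rho$ is monotone on the cone of nonnegative matrices (a standard Perron--Frobenius fact), the entrywise inequality above yields $\rho(\tilde{B}\mathcal{A} - D + cI) \le \rho(B\mathcal{A} - D + cI)$ and hence $\lambda_1(\tilde{B}\mathcal{A} - D) \le \lambda_1(B\mathcal{A} - D) < 0$, which is exactly asymptotic stability of $\dot{\vec{x}} = (\tilde{B}\mathcal{A} - D)\vec{x}$.

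The routine parts are the entrywise bookkeeping and the appeal to the LTI stability criterion; the point I regard as the crux is the monotonicity of the dominant eigenvalue under entrywise domination, i.e.\ the reduction to monotonicity of the spectral radius of nonnegative matrices via the shift $M \mapsto M + cI$. A self-contained alternative that avoids quoting Perron--Frobenius directly is to use the hypothesis to produce a common positive scaling certificate: $\lambda_1(B\mathcal{A} - D) < 0$ implies there is a vector $\vec{v} > 0$ with $(B\mathcal{A} - D)\vec{v} < 0$ componentwise, and then $(\tilde{B}\mathcal{A} - D)\vec{v} \le (B\mathcal{A} - D)\vec{v} < 0$ shows, by the same Metzler-matrix criterion, that $\tilde{B}\mathcal{A} - D$ is Hurwitz.
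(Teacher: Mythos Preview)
Your argument is correct and, in fact, tighter than the paper's. The paper does not work at the spectral level; instead it expands the $i$-th component of $(\tilde{B}\mathcal{A}-D)\vec{x}$, drops the nonnegative term $\beta_i\tilde{x}_i\sum_j\mathcal{A}_{ij}x_j$ to obtain the componentwise differential inequality $\dot{x}_i \le \beta_i\sum_j\mathcal{A}_{ij}x_j-\delta_i x_i$, and then invokes asymptotic stability of the upper-bounding system $\dot{\hat{\vec{x}}}=(B\mathcal{A}-D)\hat{\vec{x}}$ to conclude stability of the original one. That is a comparison-principle argument for cooperative (Metzler) systems rather than an eigenvalue argument.

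Both approaches exploit the same structural fact---nonnegative off-diagonals and $\tilde{B}\le B$---but yours converts it directly into $\lambda_1(\tilde{B}\mathcal{A}-D)\le\lambda_1(B\mathcal{A}-D)<0$ via Perron--Frobenius monotonicity (or, equivalently, via a common positive scaling vector), which immediately gives the Hurwitz property without tracking trajectories. The paper's route is more heuristic: the step ``$\dot{x}_i(t)\le\dot{\hat{x}}_i(t)$ for all $t$ implies stability'' tacitly assumes nonnegative trajectories and a comparison lemma for monotone systems that is not stated. Your spectral argument avoids that gap and yields a self-contained proof; the paper's version is shorter to write but leans on an unstated comparison principle.
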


\begin{proof}
Using the definition of $\tilde{B} = \text{diag} \left( \left( 1-\tilde{x}_i \right) \beta_i\right)$, the dynamics of node $i$ is given by
\begin{equation}
	\dot{x}_i(t) = \beta_i \sum_{j=1}^N \mathcal{A}_{ij} x_j(t)-\delta_i x_i(t) - \beta_i \tilde{x}_i \sum_{j=1}^N \mathcal{A}_{ij} x_j(t) \,.
\end{equation}
Since $\beta_i, \tilde{x}_i, x_j(t), \mathcal{A}_{ij} \geq 0$, then 
\begin{equation}
	\dot{x}_i(t) \leq \beta_i \sum_{j=1}^N \mathcal{A}_{ij} x_j(t)-\delta_i x_i(t) \,.
\end{equation}
Since $\lambda_1 \left( B \mathcal{A} - D \right) < 0$, then the system 
\begin{equation}
	\dot{\hat{x}}_i(t) = \beta_i \sum_{j=1}^N \mathcal{A}_{ij} \hat{x}_j(t) - \delta_i \hat{x}_i(t)
\end{equation}
is asymptotically stable. Given that $\dot{x}_i(t) \leq \dot{\hat{x}}_i(t)\,, \forall t$, we can conclude that the linearized dynamics system, $\dot{\vec{x}}(t) = \left( \tilde{B} \mathcal{A} - D \right) \vec{x}(t)$, should also be asymptotically stable.
\end{proof}

To study the effect of changes in infection and recovery rates of a node (i.e.\,, $\beta_k$ or $\delta_k$) on the observability Gramian, assume a change occurs at time $t = \tau$ where $\vec{x}(\tau)=\tilde{\vec{x}}$. Now, by using the result of \cref{Linearizedxtilde}, the dynamics of the system can be written as
\begin{equation} \label{Atilde}
	\dot{\vec{x}}(t) \approx \left( \tilde{B} \mathcal{A} - D \right) \vec{x}(t) = \tilde{A} \vec{x}(t)\,.
\end{equation}

\begin{proposition} \label{uniqueSol}
Given a positive semi-definite matrix, $W$, and the adjacency matrix, $\mathcal{A}$, of a connected graph, then the Lyapunov equation
\begin{equation} \label{LyapAtilde}
	\tilde{A}^T \left(\frac{\partial W}{\partial \chi}\right) + \left(\frac{\partial W}{\partial \chi}\right) \tilde{A} + \left(\frac{\partial \tilde{A}}{\partial \chi}\right)^T W + W \left(\frac{\partial \tilde{A}}{\partial \chi}\right) =0\,,
\end{equation}
where $\tilde{A}$ was defined in \eqref{Atilde}, has exactly one solution for $\displaystyle \left(\frac{\partial W}{\partial \chi}\right)$.
\end{proposition}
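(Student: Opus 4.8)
The plan is to read \eqref{LyapAtilde} as a single \emph{linear} matrix equation in the unknown $X := \partial W/\partial\chi$. Since $W$ and $\partial\tilde A/\partial\chi$ are prescribed data, the last two terms collapse into a fixed matrix $Q := \left(\partial\tilde A/\partial\chi\right)^{T} W + W\left(\partial\tilde A/\partial\chi\right)$, and \eqref{LyapAtilde} is exactly the Lyapunov equation $\tilde A^{T} X + X\tilde A = -Q$. Uniqueness of the solution is therefore equivalent to invertibility of the linear operator $\mathcal{L}(X) = \tilde A^{T} X + X\tilde A$ on the space of square matrices of the relevant size.

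First I would vectorize using the Kronecker-product identity $\mathrm{vec}(\tilde A^{T} X + X\tilde A) = \left(I\otimes\tilde A^{T} + \tilde A^{T}\otimes I\right)\mathrm{vec}(X)$, turning the equation into $\left(I\otimes\tilde A^{T} + \tilde A^{T}\otimes I\right)\mathrm{vec}(X) = -\mathrm{vec}(Q)$. The eigenvalues of this Kronecker sum are precisely the pairwise sums $\lambda_i(\tilde A) + \lambda_j(\tilde A)$ over all index pairs, so $\mathcal{L}$ is a bijection, i.e.\ \eqref{LyapAtilde} has exactly one solution, if and only if $\lambda_i(\tilde A) + \lambda_j(\tilde A)\neq 0$ for every $i,j$. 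To discharge this condition I would invoke \cref{AtildeStable}: under the standing hypotheses (connected graph, positive $\beta_i,\delta_i$, and $\lambda_1(B\mathcal{A}-D)<0$) the matrix $\tilde A = \tilde B\mathcal{A}-D$ is asymptotically stable, hence $\mathrm{Re}\,\lambda_i(\tilde A)<0$ for all $i$; thus $\mathrm{Re}\left(\lambda_i(\tilde A)+\lambda_j(\tilde A)\right)<0$ for every pair, so none of these sums vanishes. This makes the Kronecker-sum matrix nonsingular, which yields the claim. As a bonus one gets the explicit representation $\partial W/\partial\chi = \int_0^{\infty} e^{\tilde A^{T} t}\, Q\, e^{\tilde A t}\,\mathrm{d}t$, which simultaneously shows existence.

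I do not expect a genuine obstacle here: the whole argument rests on the classical spectral solvability criterion for the Lyapunov equation together with the Hurwitz property of $\tilde A$, which is already in hand. The one point worth flagging in the write-up is that graph connectivity and positive semidefiniteness of $W$ are not actually used in the uniqueness argument itself — connectivity enters only indirectly through \cref{AtildeStable}, and the semidefiniteness of $W$ is relevant only to interpreting $\partial W/\partial\chi$ as the sensitivity of a bona fide Gramian; what the proof truly needs is merely that $\tilde A$ has no pair of eigenvalues summing to zero, which asymptotic stability supplies automatically.
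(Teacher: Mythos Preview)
Your proposal is correct and follows essentially the same approach as the paper: both arguments reduce \eqref{LyapAtilde} to a Lyapunov equation $\tilde A^{T}X + X\tilde A = -Q$ with $Q = (\partial\tilde A/\partial\chi)^{T}W + W(\partial\tilde A/\partial\chi)$, invoke \cref{AtildeStable} to get Hurwitz stability of $\tilde A$, and conclude that the Lyapunov operator $\mathcal{L}(X)=\tilde A^{T}X+X\tilde A$ is nonsingular. The paper simply cites this last fact, whereas you unpack it via the Kronecker-sum spectrum $\{\lambda_i(\tilde A)+\lambda_j(\tilde A)\}$ and even supply the integral representation; your remark that positive semidefiniteness of $W$ and graph connectivity are not actually consumed by the uniqueness step is also accurate and worth keeping.
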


\begin{proof}
The proof is directly derived from the result of \cref{AtildeStable}, which demonstrates that $\tilde{A}$ is stable. Since $\tilde{A}$ is stable, we know that the Lyapunov operator defined by $\mathcal{L}(P) = \tilde{A}^T P + P \tilde{A}$ is non-singular and for a symmetric matrix, $Q$, the equation $\displaystyle \mathcal{L}(P)+Q = 0$ has exactly one solution for $P$. Since $\displaystyle \left[ \left(\frac{\partial \tilde{A}}{\partial \chi}\right)^T W + W \left(\frac{\partial \tilde{A}}{\partial \chi}\right)\right]$ is a symmetric matrix, thus the Lyapunov equation \eqref{LyapAtilde} has a unique solution.
\end{proof} 

Now, assume $W_{O}(\beta^0,\delta^0)$ is the empirical observability Gramian obtained from \eqref{EmpGram_i} for the initial values of infection and recovery rates, $\displaystyle \{\beta_i^0\}_{i=1}^N$ and $\displaystyle \{\delta_i^0\}_{i=1}^N$ respectively. Considering $\tilde{A} = \left( \tilde{B} \mathcal{A} - D \right)$, then 
\begin{equation} \label{Atilde_beta_delta}
\frac{\partial \tilde{A}}{\partial \beta_k} = \left(1-\tilde{x}_k \right) \vec{e}_k \vec{e}_k^T \mathcal{A} \,, \ \ \frac{\partial \tilde{A}}{\partial \delta_k} = - \vec{e}_k \vec{e}_k^T \,.
\end{equation}
%

In the case of a change in the infection rate of node $k$ (e.g.\,, because of vaccination of node $k$), or a change in the recovery rate of node $k$ (e.g.\,, because of allocating antidotes on node $k$), we do not need to re-calculate the empirical observability Gramian, $W_{O}$. Instead, we can substitute $\chi = \beta_k$ or $\chi = \delta_k$ in \eqref{LyapAtilde} and use \eqref{Atilde_beta_delta} to solve the corresponding Lyapunov equation. 
As given by \cref{uniqueSol}, there is exactly one solution of this Lyapunov equation which is used to obtain the updated observability Gramian matrix in the case of a change in the infection and recovery rates.

\section{OPTIMAL OBSERVING NODES ALLOCATION}
\label{sec:SuperNode}

The optimization problem \eqref{MaxDet} is a boolean nonlinear programming problem. Because of the binary constraint, this optimization problem is a non-convex problem. A common method for solving these types of optimization problems is computing a lower-bound on the optimal value of the non-convex problem. In general, there are two standard methods to solve these types of non-convex problems. 

The first method is relaxation, in which the non-convex constraint is replaced with a looser, but convex constraint. For example, in the case when $r$ is large, a good approximate solution of \eqref{MaxDet} can be found by ignoring, or relaxing, the constraint that the values of $\zeta_i$ are integers \cite{boyd2004convex}. 

The next method is Lagrangian relaxation, where we need to solve the convex dual problem. For example, for solving \eqref{MaxDet}, the boolean constraint can also be reformulated as $\zeta_i (\zeta_i-1)=0$, which is a quadratic equality constraint. Then we can solve the Lagrange dual of this problem. The optimal value of the relaxed problem provides a lower bound on the optimal value of the original optimization problem. 

Here, the \emph{Outer Approximation} technique, given in \cite{bonami2008algorithmic}, is used to solve our optimization problem. The advantage of this method compared to the two conventional relaxation methods discussed is that the algorithm converges to an optimal solution of \eqref{MaxDet} in a finite number of iterations. The convergence proof can be found in \cite{bonami2008algorithmic}. This algorithm uses linearization of the objective function and the constraints at different points to build a mixed integer linear programming relaxation of the problem. If $F(\zeta)$ refers to the cost function given in \eqref{MaxDet}, then we have
\begin{equation}
	 \nabla_j  F(\zeta) = -\text{trace} \left[ \left( \sum_{i=1}^{N} \zeta_i W_{O,i} \right)^{-1} W_{O,j} \right]\,.
\end{equation}
A lower bound solution of problem \eqref{MaxDet} is obtained by solving a Mixed Integer Linear Programming as follows. For any given set of points $T$, we can build a relaxation of \eqref{MaxDet} as:
\begin{equation}
\begin{aligned}
& \underset{\zeta}{\text{minimize}}
& & \alpha \\
& \text{subject to}
& & \nabla F(\zeta-\bar{\zeta})+F(\bar{\zeta}) \leq \alpha, \ \ \forall \bar{\zeta} \in T \\
&
& & \sum_{i=1}^{N} \zeta_i \leq r\,, \ \ \zeta \in \{0,1\}\,.
\end{aligned} \label{RelaxedMaxDet}
\end{equation}
The relaxation \eqref{RelaxedMaxDet} results in an iterative algorithm for solving the original problem \eqref{MaxDet}. This iterative algorithm basically relies on updating the set of linearization points, $T$. The algorithm starts with $T = \{\zeta^0\}$, where $\zeta^0$ is a feasible solution of the original problem \eqref{MaxDet} or of its continuous relaxed problem. Each iteration starts by solving \eqref{RelaxedMaxDet} to find a point $(\alpha^k,\zeta^k)$ and a lower bound $\alpha^k$ on the optimal value of \eqref{MaxDet}. Now, $\zeta^k$ is added to $T$. The algorithm stops when the difference between the lower bound linear approximation and the actual value of the cost function at that point becomes negligible. The algorithm is described in Algorithm \ref{OuterApprox}. 
\begin{algorithm}[!h]
 $Z^U:=+\infty$\;
 $Z^L:=-\infty$\;
 $\zeta^0:=$ optimal solution of \eqref{MaxDet}, replacing the integer constraint with continuous constraint $0 \leq \zeta^i \leq 1$\;
 $k:=1$\; 
 Choose a convergence tolerance $\epsilon_t$\;
 \While{$Z^U-Z^L > \epsilon_t$ and \eqref{RelaxedMaxDet} is feasible}{
  Let $(\hat{\alpha},\hat{\zeta})$ be the optimal solution of \eqref{RelaxedMaxDet}\;
  $\zeta^k:=\hat{\zeta}$\;
  $Z^L:=\hat{\alpha}$\;
  $Z^U:=\min\{Z^U,F(\zeta^k)\}$\;
  $T:=T \cup \{\zeta^k\}$\;
  $k:=k+1$\;
 }
 \caption{Outer approximation algorithm} \label{OuterApprox}
\end{algorithm}


\section{APPLICATION: DETECTING EPIDEMIC DISEASE TRANSMISSION}
\label{sec:goodwin}

The problem of containing virus spreading processes in a network has been studied in the literature, e.g. \cite{preciado2014optimal}. This modelling is an important step toward understanding the behaviour of spreading a disease in a community. However, given an arbitrary network, it remains a challenge to determine the requirements that make the entire network observable. Here, we consider a network of connected agents, and find the best selection of nodes which yield the state information of the entire network. In this section, we present simulation results that demonstrate the effect of graph configurations and demonstrate the corresponding theoretical analysis.

It is worthy to note here that the optimal value function in \eqref{MaxDet} depends on the value of $r$. If $r$ is small compared to $N$, the optimal solution of \eqref{MaxDet} is too large. If the optimal solution is large, we conclude that the determinant of the observability Gramian of the network with observing nodes obtained from the optimization problem is very small, which corresponds to a practically unobservable condition. Therefore, first we need to find the minimum number of observing nodes required for observability. Here, to find a minimum number of observing nodes, we set an upper-bound accepted value for the cost function, such that the solution of \eqref{MaxDet} is acceptable only if the solution is less than the upper-bound accepted value. In this paper, the upper-bound accepted value is set to zero, which is equivalent to $\det(W_O) \geq 1$.

Consider a fairly small group of people interacting with each other. This group is modelled by an undirected graph of 15 nodes (\cref{10Graph}). The nodes in the graph could be students of a class in an elementary school, and the edges denote the friendship between the children in this class. This graph could also be cities in a specific region which are connected through air transportation.

The initial healthy/infected status of each node and the infection and recovery rates are chosen randomly. Here, the purpose is to find a small number of observing nodes, if possible, such that we obtain the health data of the entire group. We first try to find a single observing node to obtain the status of the entire network. In the first example (\cref{Random10Graph}), we have a dense structure. This structure is an example of a network with many interactions between them. By solving the optimization problem, one can find that by observing node 6, the network becomes observable. Now, consider a network with much less interaction between the nodes (\cref{Sparse10Graph}). The calculation shows that in this case, we are not able to attain the state information of the entire network by observing only a single node. In the case of sparse structure with one observing node, the observability Gramian is close to be singular, $\det(W_O) \approx 0$. Therefore, the network is not observable. By increasing the number of observing nodes to $r=2$, the solution of the optimization problem becomes feasible, and hence, the network becomes observable.
\begin{figure}[!h]
        \centering
        \begin{subfigure}[b]{0.23\textwidth}
                \centering
                \includegraphics[width=\textwidth]{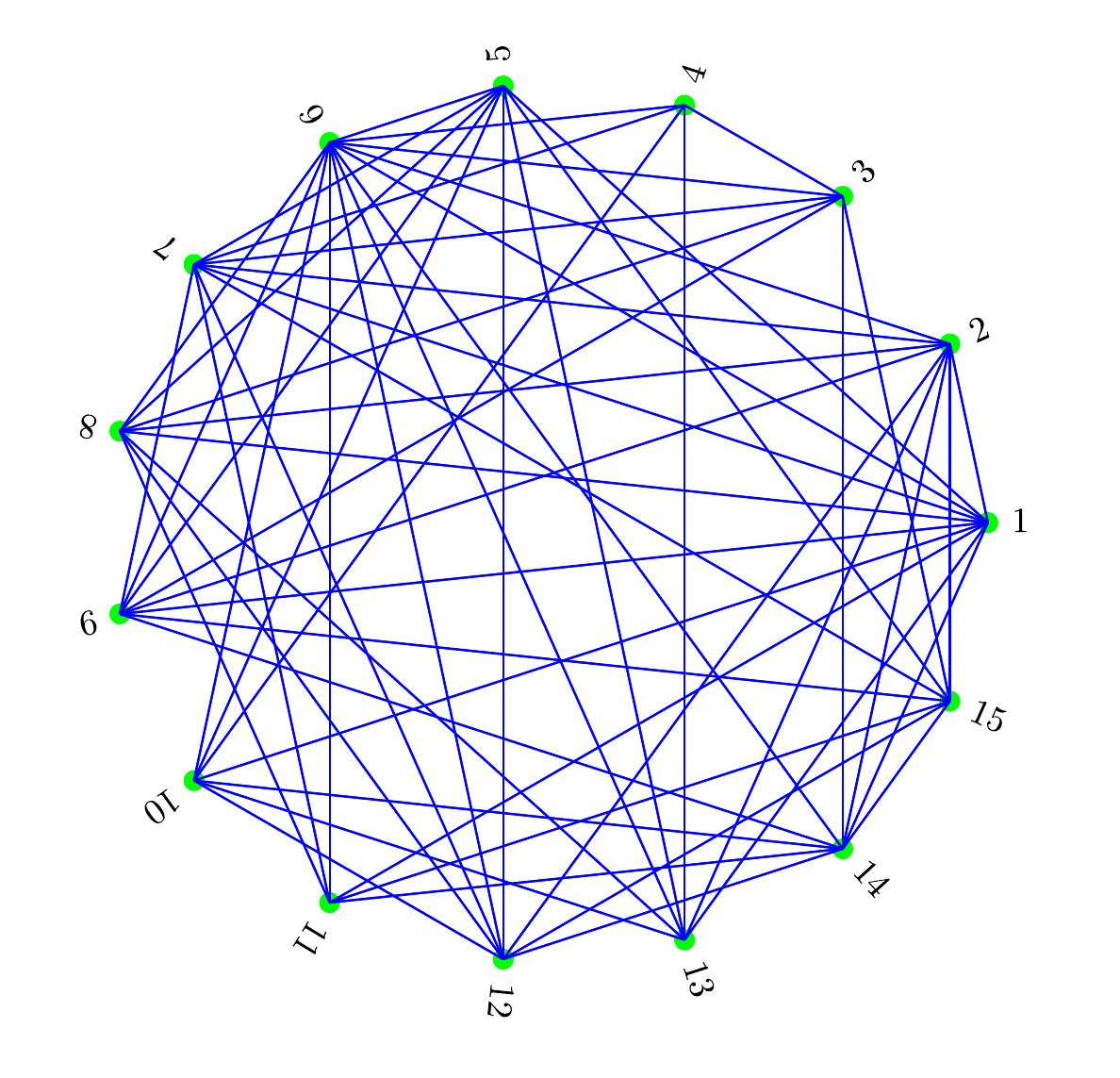}
                \caption{dense graph}
                \label{Random10Graph}
        \end{subfigure}%
        \quad 
        \begin{subfigure}[b]{0.23\textwidth}
                \centering
                \includegraphics[width=\textwidth]{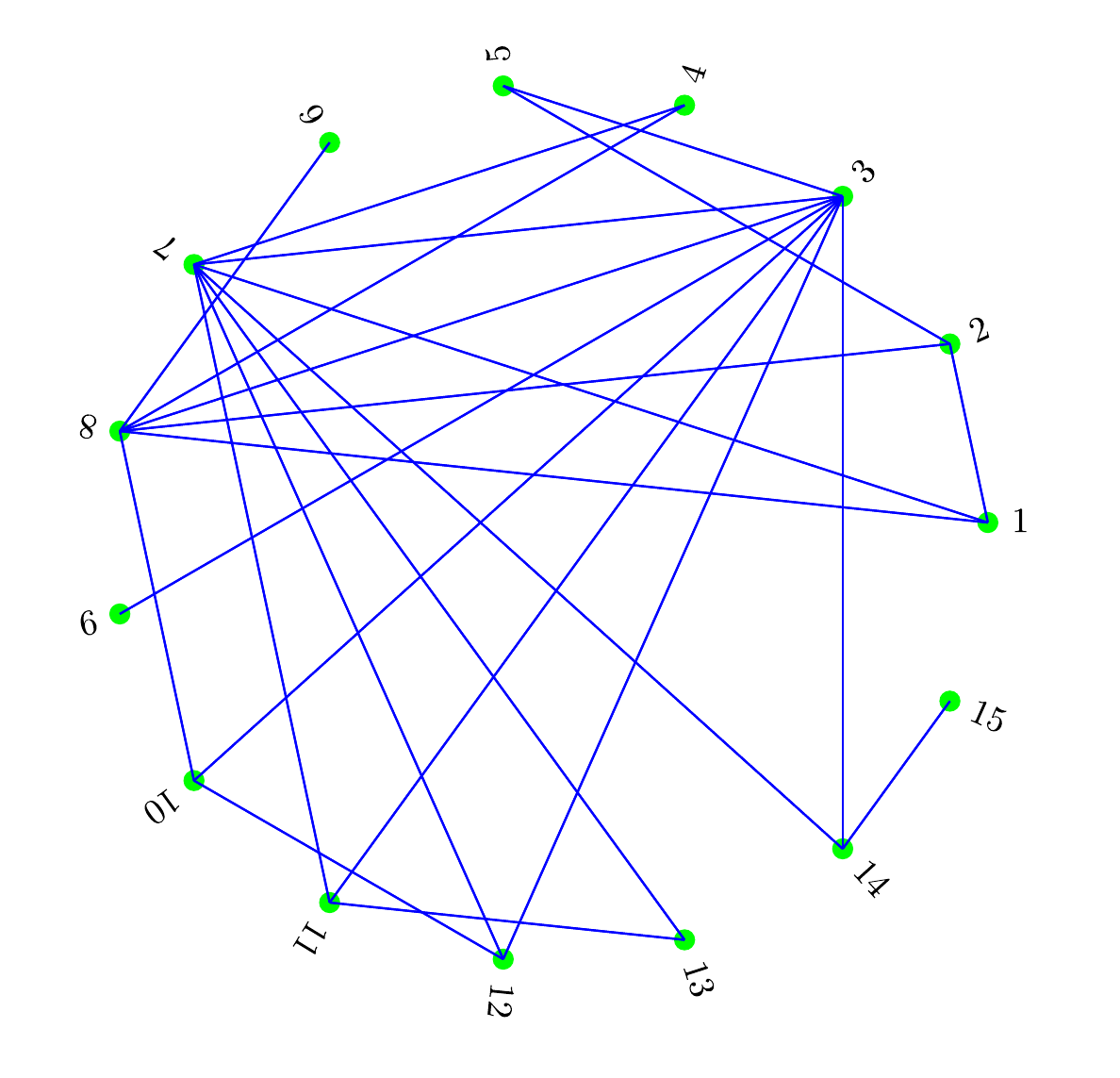}
                \caption{sparse graph}
                \label{Sparse10Graph}
        \end{subfigure}%
        \caption{Two different graph topologies of a network of 15 nodes: (a) dense and (b) sparse structures.} \label{10Graph}
\end{figure}


\section{CONCLUSIONS}
\label{sec:conclusion}

The work in this paper has been concerned with obtaining the optimal node selection for maximizing an index of observability for a network. The empirical observability Gramian has been used as a tool for improving the local observability of nonlinear systems. The observability index was chosen to be the determinant of the inverse of the observability Gramian. We proposed an optimization problem for obtaining the optimal node selection that provides the full state observability of a network. The optimization problem framed as a mixed integer nonlinear programming problem. The outer approximation algorithm was used as a relaxation method for solving a convex optimization problem with integer non-convex constrains. The outer approximation algorithm solves a relaxed approximation iteratively, and converges to the optimal value of the original mixed integer nonlinear problem. We applied the results on a model of the virus spreading process of a disease in an arbitrary network. The results show that it is possible to reconstruct the state of all the nodes by observing a select number of nodes. The result of this work is applicable on similar problems, such as the spreading of an idea or rumour through a social network like Twitter and the problem of a spreading computer virus through the World Wide Web.

In this paper, we assumed the weights are known and we do not have control on changing them. An interesting direction for future research is estimation of the infection rate and recovery rate of the nodes. Initial results in this paper indicated that if the structure of interactions between nodes and the weights of interaction between nodes (here, $\beta_i$ and $\delta_i$) are known, then by directly observing a few number of nodes in a complex network, we are able to estimate the states of the entire nodes. However, the problem of states estimation with unknown interaction weights is still a challenging problem. This analysis is an important problem when we are dealing with complex real world situations with unknown interactions (for example trusting interaction between people in a group) or when we have control on changing the weights. We are currently investigating adaptive optimal graph topologies that, given constraints on the available sensing resources, provide optimal performance regarding the observability of the system.







\bibliography{citations}
\bibliographystyle{IEEEtran}

\end{document}